\newtheorem{theorem}{Theorem}
\newtheorem{lemma}{Lemma}
\newcommand{\script}[1]{{{\cal{#1} }}}
\newcommand{\expect}[1]{\mathbb{E}\left[#1\right]}
\begin{document}
\title{Power Aware Wireless File Downloading: A Constrained Restless Bandit Approach}
\author{\authorblockN{Xiaohan Wei, \textit{Student Member, IEEE} and Michael J. Neely, \textit{Senior Member, IEEE}}
\thanks{The authors are with the  Electrical Engineering department at the University of Southern California, Los Angeles, CA.}
\thanks{This material is supported in part  by one or more of:  the NSF grants CCF-0747525, 0964479, 1049541, the Network Science Collaborative Technology Alliance sponsored by the U.S. Army Research Laboratory W911NF-09-2-0053.}}

\maketitle

\begin{abstract}
This paper treats power-aware throughput maximization in a multi-user file downloading system. Each user can receive a new file only after its previous file is finished.  The file state processes for each user act as coupled Markov chains that form a generalized restless bandit system. First, an optimal algorithm is derived for the case of one user. The algorithm maximizes throughput subject to an average power constraint. Next, the one-user algorithm is extended to a low complexity heuristic for the multi-user problem. The heuristic uses a simple online index policy and its effectiveness is shown via simulation. For simple 3-user cases where the optimal solution can be computed offline, the heuristic is shown to be near-optimal for a wide range of parameters.
\end{abstract}

\section{Introduction}

Consider a wireless access point, such as a base station or  femto node, that delivers files to
$N$ different wireless users.  The system operates in slotted time with time slots $t \in \{0, 1, 2, \ldots\}$.
Each user can download at most one file at a time.   File sizes are random and complete delivery of a file requires a random number of time slots. A new file request is made by each user at a random
time after it finishes its previous download.
Let $F_n(t) \in \{0,1\}$ represent the binary \emph{file state process} for user $n \in \{1, \ldots, N\}$.  The state $F_n(t)=1$ means that user $n$ is currently active downloading a file, while the state $F_n(t)=0$ means that user $n$ is currently idle.

Idle times are assumed to be independent and geometrically distributed with parameter $\lambda_n$ for each user $n$, so that the average idle time is $1/\lambda_n$.  Active times depend on the random file size and the  transmission decisions that are made.  Every slot $t$, the access point observes which users are active and
decides to serve a subset of at most $M$ users, where $M$ is the maximum number of simultaneous transmissions allowed in the system ($M < N$ is assumed throughout).
The goal is to maximize a weighted sum of throughput subject to a total average power constraint.

The file state processes $F_n(t)$ are coupled controlled Markov chains that form a total state
$(F_1(t), \ldots, F_N(t))$ that can be viewed as a \emph{restless multi-armed
bandit system}.
Such problems are complex due to the inherent curse of dimensionality.

This paper first computes an online optimal algorithm for 1-user systems, i.e., the case $N=1$.  This simple case avoids the curse of dimensionality and provides valuable intuition.  The optimal policy here is nontrivial and uses the theory of Lyapunov optimization for renewal systems \cite{Neely2010}.
 The resulting algorithm makes a greedy transmission decision that affects success probability and power usage. The decision is
 based on a \emph{drift-plus-penalty} index.   Next, the algorithm is extended as a low complexity online heuristic for the $N$-user problem.   The heuristic has the following desirable properties:
\begin{itemize}
\item  Implementation of the $N$-user heuristic is as simple as comparing indices for $N$ different 1-user problems.

\item The $N$-user heuristic is analytically shown to meet the desired average power constraint.

\item  The $N$-user heuristic is shown in simulation to perform well over a wide range of parameters.  Specifically, it is very close to optimal for example 3-user cases where an offline optimal can be computed.
\end{itemize}

Prior work on wireless optimization uses Lyapunov functions to maximize throughput in
cases where the users are assumed to have an infinite amount of data to
send \cite{tass-server-allocation}\cite{stolyar-greedy}\cite{now}\cite{atilla-fairness-ton}\cite{neely-fairness-ton}\cite{Srikant2011}\cite{Huang2013}, or
when data arrives according to a fixed rate process that does not depend on delays in the
network  (which necessitates \emph{dropping} data if the arrival rate vector is outside of the capacity
region) \cite{now}\cite{neely-fairness-ton}.  These models do not consider the interplay between arrivals at the transport layer and file delivery at the network layer.   The current paper captures this interplay through the binary file state processes $F_n(t)$. This
creates a complex problem of
coupled Markov chains.    This problem is fundamental to
file downloading systems.
The modeling and analysis of these systems is a  significant contribution of the current paper.

Markov decision problems (MDPs) can be solved offline via linear programming \cite{puterman}.  This can be prohibitively complex for large dimensional problems.  Low complexity solutions for coupled MDPs are possible in special cases when the
coupling involves only time average constraints \cite{Neely2012}.  Finite horizon coupled MDPs are treated
via integer programming in \cite{weak1} and via a heuristic ``task decomposition'' method in \cite{weak2}.
The problem of the current paper does not fit the framework of \cite{Neely2012}-\cite{weak2} because it includes both time-average constraints (on average power expenditure) and instantaneous constraints which restrict the number of users that can be served on one slot. The latter service restriction is similar to a traditional restless multi-armed bandit (RMAB) system \cite{whittle}.

RMAB problems are generally complex (see P-SPACE hardness results in \cite{pspace}).  A standard low-complexity heuristic for such problems is the \emph{Whittle's index} technique \cite{whittle}.
Low complexity Whittle indexing has been used in RMAB models for wireless
systems \cite{zhao11}\cite{myopic2008}\cite{whittle-shroff}, where simulations demonstrate near
optimal results.  Certain special cases with symmetry are also known to be optimal \cite{zhao11}\cite{myopic2008}.  Unfortunately, not every RMAB problem has a Whittle's index, and such indices, if they exist, are not always easy to compute.  Further, the Whittle's index framework does not consider
additional time average power constraints.  The algorithm developed in the current paper can
be viewed as a Whittle-like indexing scheme that can always be implemented and that incorporates
average power constraints.
It is likely that the techniques of the current paper can be extended to other constrained
RMAB problems.

\section{Single User Scenario}

Consider a file downloading system that consists of only one user that repeatedly downloads files.
Let $F(t) \in \{0,1\}$ be the file state process of the user.  State ``1'' means there is a file in the system that has not completed its download,
and  ``0'' means no file is waiting.
The length of each file is independent and is either exponentially distributed or geometrically distributed (described in more detail below).  Let $\overline{B}$ denote the expected file size in bits. Time is slotted. At each slot in which there is an active file for downloading, the user makes a service decision that affects both the downloading success probability and the power expenditure. After a file is downloaded, the system goes idle (state $0$) and remains in the idle state for a random amount of time that is independent and geometrically distributed with parameter $\lambda>0$.

A transmission decision is made on each slot $t$ in which $F(t)=1$.  The decision affects the number of bits that are sent, the probability these bits are successfully received, and the power usage.
Let $\alpha(t)$ denote the decision variable at slot $t$ and let $\mathcal{A}$ represent the abstract action set with a finite number of elements.  The set $\script{A}$ can represent a collection of modulation and coding
options for each transmission. Assume also that $\mathcal{A}$ contains an idle action denoted as ``0.'' The
decision  $\alpha(t)$ determines the following two values:
\begin{itemize}
  \item The probability of successfully downloading a file $\phi(\alpha(t))$, where $\phi(\cdot)\in[0,1]$ with $\phi(0)=0$.
  \item The power expenditure $p(\alpha(t))$, where $p(\cdot)$ is a nonnegative function with $p(0)=0$.
\end{itemize}
The user chooses $\alpha(t) = 0$ whenever $F(t)=0$.
The user chooses $\alpha(t) \in \script{A}$ for each slot $t$ in which $F(t) = 1$, with the goal of maximizing throughput subject to a time average power constraint.

The problem can be described by a two state Markov decision process with binary state $F(t)$. Given $F(t)=1$, a file is currently in
the system. This file will finish its download at the end of the slot with probability $\phi(\alpha(t))$. Hence, the transition probabilities out of state $1$ are:
\begin{eqnarray}
Pr[F(t+1) = 0 | F(t)=1] &=& \phi(\alpha(t))  \label{eq:trans1} \\
Pr[F(t+1) = 1 | F(t) = 1] &=& 1-\phi(\alpha(t)) \label{eq:trans2}
\end{eqnarray}
Given $F(t)=0$, the system is idle and will transition to the active state in the next slot with probability $\lambda$,
so that:
\begin{eqnarray}
Pr[F(t+1) = 1 | F(t) = 0] &=& \lambda  \label{eq:trans3} \\
Pr[F(t+1) =0 | F(t) = 0] &=& 1- \lambda \label{eq:trans4}
\end{eqnarray}

Define the throughput, measured by bits per slot (not files per slot) as:
\begin{equation}
    \liminf_{T\rightarrow\infty}\frac{1}{T}\sum_{t=0}^{T-1}\overline{B}\phi(\alpha(t)) \nonumber
\end{equation}
The file downloading problem reduces  to the following:
\begin{align}
    \mbox{Maximize:} &~\liminf_{T\rightarrow\infty}\frac{1}{T}\sum_{t=0}^{T-1}\overline{B}\phi(\alpha(t)) \label{eq:prob-1}\\
    \mbox{Subject to:} &~\limsup_{T\rightarrow\infty}\frac{1}{T}\sum_{t=0}^{T-1}p(\alpha(t))\leq \beta \label{eq:prob-2}\\
    & \alpha(t) \in \script{A} \mbox{ $\forall t \in \{0, 1,2, \ldots\}$ such that $F(t) =1$} \label{eq:prob-3} \\
    & \mbox{Transition probabilities satisfy \eqref{eq:trans1}-\eqref{eq:trans4}} \label{eq:prob-4}
\end{align}
where $\beta$ is a positive constant that determines the desired average power constraint.

\subsection{The memoryless file size assumption}

The above model assumes that file completion success on slot $t$ depends only on the
transmission decision $\alpha(t)$, independent of history.  This implicitly assumes that file
length distributions have a \emph{memoryless property}.

This holds when each file $i$ has independent length $B_i$ that is  \emph{exponentially distributed} with mean length $\overline{B}$ bits, so that:
\[ Pr[B_i > x] = e^{-x/\overline{B}} \mbox{ for $x >0$} \]
For example, suppose the \emph{transmission rate} $r(t)$ (in units of bits/slot) and the \emph{transmission success probability} $q(t)$ are given by general functions of $\alpha(t)$:
\begin{eqnarray*}
 r(t) &=& \hat{r}(\alpha(t)) \\
 q(t) &=& \hat{q}(\alpha(t))
 \end{eqnarray*}
 Then the file completion probability $\phi(\alpha(t))$ is the probability that the \emph{residual} amount of bits in the file is less than or equal to $r(t)$, \emph{and} that the transmission of these residual bits is a success.  By the memoryless property of the exponential distribution, the residual file length is distributed the same as the original
 file length.  Thus, the file success probability function is:
 \begin{eqnarray}
\phi(\alpha(t))
  &=& \hat{q}(\alpha(t))Pr[B_i \leq \hat{r}(\alpha(t))] \nonumber \\
  &=& \hat{q}(\alpha(t)) \int_{0}^{\hat{r}(\alpha(t))} \frac{1}{\overline{B}} e^{-x/\overline{B}} dx \label{eq:approx1}
  \end{eqnarray}

Alternatively, history independence holds when each file $i$ consists of a random number $Z_i$ of fixed length packets, where $Z_i$ is geometrically distributed with mean $\overline{Z} = 1/\mu$.   Assume each transmission
sends exactly one packet, but different power levels affect the transmission success probability $q(t) = \hat{q}(\alpha(t))$.   Then:
\begin{equation} \label{eq:approx2}
\phi(\alpha(t)) = \mu\hat{q}(\alpha(t))
\end{equation}

These memoryless file length assumptions ensure that the file state  can be modeled by a simple
binary-valued process $F(t) \in \{0, 1\}$. However, actual file sizes might be neither exponentially distributed nor geometrically
distributed.  One way to treat general distributions is to \emph{approximate} the file sizes as being memoryless by
using a $\phi(\alpha(t))$ function defined by either \eqref{eq:approx1} or \eqref{eq:approx2}, formed by matching the average file size $\overline{B}$ or average number of packets $\overline{Z}$. The \emph{decisions} $\alpha(t)$ are made according to the algorithm below, but the actual event outcomes that arise from these decisions are not memoryless.   A simulation comparison of this approximation is provided in Section \ref{section:sims}, where it is shown to be remarkably accurate (see Fig. \ref{fig:Stupendous5}).

\subsection{Lyapunov optimization}

This subsection develops an online algorithm for problem \eqref{eq:prob-1}-\eqref{eq:prob-4}. First, notice that file state ``$1$'' is recurrent under any decisions for $\alpha(t)$. Denote $t_k$ as the $k$-th time when the system returns to state ``1.''
Define the \emph{renewal frame} as the time period between $t_k$ and $t_{k+1}$.  Define the \emph{frame size}:
\[ T[k] = t_{k+1} - t_k \]
Notice that $T[k]=1$ for any frame $k$ in which the file does not complete its download.  If the file is completed on frame $k$, then $T[k] = 1 + G_k$, where $G_k$ is a geometric random variable with mean
$\expect{G_k} = 1/\lambda$.   Each frame $k$ involves only a single decision $\alpha(t_k)$ that is made at the beginning of the frame.  Thus, the total power used over the duration of frame $k$ is:
\begin{equation}\label{extra1}
\sum_{t=t_{k}}^{t_{k+1}-1}p(\alpha(t)) = p(\alpha(t_k))
\end{equation}
Using a technique similar to that
proposed in \cite{Neely2010}, we treat the time average constraint in \eqref{eq:prob-2} using a virtual queue $Q[k]$ that is updated every frame $k$ by:
\begin{equation}
    Q[k+1]=\max\left\{Q[k]
     + p(\alpha(t_k)) - \beta T[k],~0\right\} \label{eq:q-update}
\end{equation}
with initial condition $Q[0]=0$.
The algorithm is then parameterized by a constant $V\geq0$ which affects a performance tradeoff.  At the beginning of the $k$-th renewal frame, the user observes virtual queue $Q[k]$ and chooses $\alpha(t_k)$ to maximize the following drift-plus-penalty (DPP) ratio \cite{Neely2010}:
\begin{equation}\label{e4}
    \max_{\alpha(t)\in\mathcal{A}} ~~\frac{V\overline{B}\phi(\alpha(t_k))- Q[k]p(\alpha(t_k))}
        {\mathbb{E}[T(t_{k})|\alpha(t_k)]}
\end{equation}
where $\mathbb{E}[T(t_k)|\alpha(t_k)]$ can be easily computed:
\begin{eqnarray*}
\mathbb{E}[T(t_k)|\alpha(t_k)] = 1+\frac{\phi(\alpha(t_k))}{\lambda} \nonumber
\end{eqnarray*}
Thus, \eqref{e4} is equivalent to
\begin{equation}\label{e5}
    \max_{\alpha(t_k)\in\mathcal{A}} ~~\frac{V\overline{B}\phi(\alpha(t_k))- Q[k]p(\alpha(t_k))}
        {1+\phi(\alpha(t_k))/\lambda}
\end{equation}
Since there are only a finite number of elements in $\mathcal{A}$, (\ref{e5}) is easily computed.  This gives the following algorithm for the single-user case:
\begin{itemize}
  \item At each time $t_{k}$, the user observes virtual queue $Q[k]$ and chooses $\alpha(t_k)$ as the solution to (\ref{e5}) (where ties are broken arbitrarily).
  \item The value $Q[k+1]$ is computed according to \eqref{eq:q-update} at the end of the $k$-th frame.
\end{itemize}

\subsection{Average power constraints via queue bounds}

\begin{lemma} \label{lem:1}
If there is a constant $C\geq 0$ such that $Q[k] \leq C$ for all $k \in \{0, 1, 2, \ldots\}$, then:
\[ \limsup_{T\rightarrow\infty}\frac{1}{T}\sum_{t=0}^{T-1}p(\alpha(t))\leq \beta \]
\end{lemma}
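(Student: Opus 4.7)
The plan is to iterate the virtual queue update \eqref{eq:q-update} in order to convert the sample-path bound $Q[k]\leq C$ into a time-average bound on the power. First, drop the positive part in \eqref{eq:q-update} to obtain the one-step inequality $Q[k+1]\geq Q[k] + p(\alpha(t_k)) - \beta T[k]$. Telescoping from $k=0$ to $K-1$, using $Q[0]=0$ together with $\sum_{k=0}^{K-1}T[k] = t_K$, and invoking the hypothesis $Q[K]\leq C$ yields the frame-indexed bound
\[
\sum_{k=0}^{K-1} p(\alpha(t_k)) \;\leq\; Q[K] + \beta t_K \;\leq\; C + \beta t_K.
\]

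Next I would convert this sum over frames into a sum over slots. This is immediate from \eqref{extra1}, which says that the total power spent over frame $k$ equals the single term $p(\alpha(t_k))$ paid at its start, so that $\sum_{t=0}^{t_K-1}p(\alpha(t)) = \sum_{k=0}^{K-1}p(\alpha(t_k))$. Restricted to the subsequence of horizons $T=t_K$, dividing by $t_K$ and noting that $t_K\to\infty$ (since $T[k]\geq 1$ for every $k$) would already deliver the $\limsup$ bound of $\beta$.

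To handle an arbitrary horizon $T$, let $K(T)$ be the index of the frame that contains slot $T-1$, so that $t_{K(T)} \leq T-1 < t_{K(T)+1}$. Applying the frame bound with $K=K(T)$ gives $\sum_{k=0}^{K(T)-1} p(\alpha(t_k)) \leq C + \beta t_{K(T)} \leq C + \beta(T-1)$, and by \eqref{extra1} the leftover partial frame contributes at most $P_{\max}:=\max_{\alpha\in\mathcal{A}}p(\alpha)$, a finite constant because $\mathcal{A}$ is finite. Combining,
\[
\frac{1}{T}\sum_{t=0}^{T-1} p(\alpha(t)) \;\leq\; \frac{C+P_{\max}}{T} + \beta\left(1 - \frac{1}{T}\right),
\]
and letting $T\to\infty$ closes the proof.

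The calculations are essentially routine telescoping; the only mild subtlety is the end-effect from an unfinished frame that straddles the horizon $T$. This is absorbed by the boundedness of $p$ on the finite action set $\mathcal{A}$, which keeps the end-effect $O(1)$ and hence negligible after dividing by $T$.
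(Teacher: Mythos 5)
Your proof is correct and follows essentially the same route as the paper: drop the $\max\{\cdot,0\}$ to get $Q[k+1]\geq Q[k]+p(\alpha(t_k))-\beta T[k]$, telescope, use the bound $Q[K]\leq C$, and convert the frame sum to a slot sum via \eqref{extra1}. The only difference is that you explicitly handle arbitrary horizons $T$ by bounding the straddling partial frame, a detail the paper disposes of with the remark ``which yields the result by combining \eqref{extra1}''; your version is slightly more complete but not a different argument.
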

\begin{proof}
From \eqref{eq:q-update}, we know that for each frame $k$:
\begin{equation}
    Q[k+1] \geq Q[k]
     +p(\alpha(t_{k}))-T[k]\beta \nonumber
\end{equation}
Rearranging terms and using $T[k] = t_{k+1} -t_k$ gives:
\[ p(\alpha(t_k)) \leq (t_{k+1} - t_k)\beta + Q[k+1]-Q[k] \]
Fix $K>0$. Summing over $k\in\{0,1,\cdots,K-1\}$  gives:
\begin{eqnarray*}
    \sum_{k=0}^{K-1}p(\alpha(t_{k})) &\leq& (t_{K}-t_0) \beta + Q[K] - Q[0] \\
    &\leq&  t_K\beta + C
\end{eqnarray*}
The sum power over the first $K$ frames is the same as the sum up to time $t_{K}-1$, and so:
\[ \sum_{t=0}^{t_K-1} p(\alpha(t)) \leq t_K \beta + C \]
Dividing by $t_K$ gives:
\[ \frac{1}{t_K}\sum_{t=0}^{t_K-1} p(\alpha(t)) \leq \beta + C/t_K. \]
Taking $K\rightarrow\infty$, then,
\[\limsup_{K\rightarrow\infty}\frac{1}{t_K}\sum_{t=0}^{t_K-1} p(\alpha(t)) \leq \beta, \]
which yields the result by combining \eqref{extra1}.
\end{proof}

The next lemma shows that the queue process under our proposed algorithm is deterministically bounded.
Define:
\begin{eqnarray*}
p^{min} &=& \min_{\alpha\in\mathcal{A}\setminus\{0\}}p(\alpha)\\
p^{max}&=&\max_{\alpha\in\mathcal{A}\setminus\{0\}}p(\alpha)
\end{eqnarray*}

\begin{lemma} \label{lem:2} If $Q[0]=0$, then under our algorithm we have for all $k>0$:
\[ Q[k]\leq \max\left\{\frac{V\overline{B}}{p^{min}}+p^{max}-\beta,0\right\}\]
\end{lemma}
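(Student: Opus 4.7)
The plan is to prove the bound by induction on $k$, splitting the inductive step according to whether $Q[k]$ lies below or above the threshold $V\overline{B}/p^{min}$. The base case $k=0$ is immediate from $Q[0]=0$.

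In the first regime, where $Q[k]\leq V\overline{B}/p^{min}$, I would apply the queue update \eqref{eq:q-update} together with the deterministic bounds $T[k]\geq 1$ and $p(\alpha(t_k))\leq p^{max}$, which give directly $Q[k+1]\leq V\overline{B}/p^{min}+p^{max}-\beta$. Combined with the nonnegativity $Q[k+1]\geq 0$ built into \eqref{eq:q-update}, this yields the claimed bound in this case regardless of whether the quantity inside the $\max$ is positive.

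The heart of the argument is the second regime, $Q[k]>V\overline{B}/p^{min}$. Here the goal is to show that the algorithm is forced to select the idle action $\alpha(t_k)=0$, so that $p(\alpha(t_k))=0$ and the update gives $Q[k+1]\leq Q[k]$, preserving the inductive hypothesis. To see this, observe that for every non-idle action $\alpha\in\mathcal{A}\setminus\{0\}$ one has $\phi(\alpha)\leq 1$ and $p(\alpha)\geq p^{min}$, so the numerator of the DPP ratio \eqref{e5} satisfies
\[ V\overline{B}\phi(\alpha)-Q[k]p(\alpha) < V\overline{B}\cdot 1 - (V\overline{B}/p^{min})\cdot p^{min}=0. \]
Since the denominator $1+\phi(\alpha)/\lambda$ is positive, the ratio is strictly negative for every non-idle action, while the idle choice $\alpha=0$ gives ratio exactly $0$; hence the algorithm uniquely selects $\alpha(t_k)=0$ and no power is expended on frame $k$.

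The step requiring most care is this Case 2 analysis, which must convert an assumption on the magnitude of $Q[k]$ into a definite statement about the action the algorithm actually picks. The conversion relies on both $\phi(\alpha)\leq 1$ and the uniform lower bound $p(\alpha)\geq p^{min}$ over non-idle actions, and it is the latter that forces $p^{min}$ to appear in the final bound. Once this selection property is in hand, the rest reduces to routine queue arithmetic via \eqref{eq:q-update}.
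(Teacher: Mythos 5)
Your proof is correct and follows essentially the same route as the paper's: induction on $k$ with the split at the threshold $V\overline{B}/p^{min}$, and the key observation that above this threshold every non-idle action makes the numerator of the ratio in \eqref{e5} strictly negative while the idle action achieves zero, forcing $p(\alpha(t_k))=0$ so the queue cannot grow. The only cosmetic difference is that the paper dispatches the case $p^{max}\leq\beta$ separately up front, whereas you absorb it uniformly via the $\max\{\cdot,0\}$ in \eqref{eq:q-update}.
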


\begin{proof}
First, consider the case when $p^{max} \leq \beta$. From \eqref{eq:q-update} and the fact that $T[k] \geq 1$ for all $k$,
 it is clear  the queue can never increase, and so $Q[k] \leq Q[0]=0$ for all $k>0$.

Next, consider the case when $p^{max} > \beta$. We prove the assertion by induction on $k$.
The result trivially holds for $k=0$.  Suppose it holds at $k=l$ for $l>0$, so that:
\[ Q[l]\leq \frac{V\overline{B}}{p^{min}}+p^{max}-\beta \]
We are going to prove that the same holds for $k=l+1$.  There are two cases:
\begin{enumerate}
  \item $Q[l]\leq\frac{V\overline{B}}{p^{min}}$. In this case we have by \eqref{eq:q-update}:
  \begin{eqnarray*}
   Q[l+1] &\leq& Q[l] + p^{max} - \beta \\
   &\leq& \frac{V\overline{B}}{p^{min}} + p^{max} - \beta
   \end{eqnarray*}

    \item $\frac{V\overline{B}}{p^{min}}<Q[l]\leq\frac{V\overline{B}}{p^{min}}+p^{max}-\beta$.     In this case,
    if $p(\alpha(t_l))=0$ then the queue cannot increase, so:
    \[ Q[l+1] \leq Q[l] \leq   \frac{V\overline{B}}{p^{min}} + p^{max} - \beta  \]
    On the other hand, if $p(\alpha(t_l))>0$ then $p(\alpha(t_l)) \geq p^{min}$ and so the numerator in
    \eqref{e5} satisfies:
    \begin{eqnarray*}
    V\overline{B}\phi(\alpha(t_l)) - Q[l]p(\alpha(t_l))
    &\leq& V\overline{B} - Q[l]p^{min} \\
    &<& 0
    \end{eqnarray*}
    and so the maximizing ratio in \eqref{e5} is negative.  However, the maximizing ratio in \eqref{e5} \emph{cannot} be negative,
    because the alternative  choice $\alpha(t_l)=0$ would increase the ratio to 0.  This contradiction implies that
    we cannot have $p(\alpha(t_l))>0$.
\end{enumerate}
\end{proof}

The above is a \emph{sample path result} that
used only the fact that $\lambda >0$ and $0 < p^{min}\leq p(t) \leq p^{max}$.  Thus, the algorithm meets the average power constraint even if the
$\lambda$,  $\overline{B}$, and $\phi(\alpha(t))$ values used in the algorithm
are only \emph{estimates} of the true values.

\subsection{Optimality over randomized algorithms}

Consider the following class of \emph{i.i.d. randomized algorithms}:  Let $\theta(\alpha)$ be non-negative numbers defined for each $\alpha \in \script{A}$, and suppose they satisfy $\sum_{\alpha \in \script{A}} \theta(\alpha) = 1$.  Let $\alpha^*(t)$ represent a policy that, every slot $t$ for which $F(t)=1$, chooses $\alpha^*(t) \in \script{A}$ by independently selecting strategy $\alpha$ with probability $\theta(\alpha)$.    Then $(p(\alpha^*(t_k)), \phi(\alpha^*(t_k)))$ are independent and identically distributed (i.i.d.) over frames $k$. Under this algorithm, it follows by the law of large numbers that the throughput and power expenditure satisfy (with probability 1):
\begin{eqnarray*}
\lim_{t\rightarrow\infty} \frac{1}{T}\sum_{t=0}^{T-1} \overline{B}\phi(\alpha^*(t)) &=& \frac{\overline{B}\expect{\phi(\alpha^*(t_k))}}{1 + \expect{\phi(\alpha^*(t_k))}/\lambda} \\
\lim_{t\rightarrow\infty} \frac{1}{T}\sum_{t=0}^{T-1} p(\alpha^*(t)) &=& \frac{\expect{p(\alpha^*(t_k))}}{1 + \expect{\phi(\alpha^*(t_k))}/\lambda}
\end{eqnarray*}
It can be shown that optimality of problem \eqref{eq:prob-1}-\eqref{eq:prob-4} can be achieved over this class. Thus, there exists an i.i.d. randomized algorithm $\alpha^*(t)$ that satisfies:
\begin{eqnarray}
\frac{\overline{B}\expect{\phi(\alpha^*(t_k))}}{1 + \expect{\phi(\alpha^*(t_k))}/\lambda} &=& \mu^* \label{eq:iid1} \\
 \frac{\expect{p(\alpha^*(t_k))}}{1 + \expect{\phi(\alpha^*(t_k))}/\lambda} &\leq& \beta \label{eq:iid2}
\end{eqnarray}
where $\mu^*$ is the optimal throughput for the problem \eqref{eq:prob-1}-\eqref{eq:prob-4}.

\subsection{Key feature of the drift-plus-penalty ratio}

Define $\script{H}[k]$ as the \emph{system history} up to frame $k$, which includes all
random events that occurred before frame $k$, and also includes the queue value $Q[k]$ (since
this is determined by the random events before frame $k$). Consider the algorithm
that, on frame $k$, observes $Q[k]$ and
chooses $\alpha(t_k)$ according to \eqref{e5}.  The following key feature of this algorithm can be
shown (see \cite{Neely2010} for related results):
\begin{align*}
&\frac{\expect{-V\overline{B}\phi(\alpha(t_k)) + Q[k]p(\alpha(t_k))|\script{H}[k]}}{\expect{1 + \phi(\alpha(t_k))/\lambda|\script{H}[k]}}
\nonumber \\
&\leq
\frac{\expect{-V\overline{B}\phi(\alpha^*(t_k)) + Q[k]p(\alpha^*(t_k))|\script{H}[k]}}{\expect{1 + \phi(\alpha^*(t_k))/\lambda|\script{H}[k]}}
\end{align*}
where $\alpha^*(t_k)$ is any (possibly randomized) alternative decision that is based only on $\script{H}[k]$.  Using
the i.i.d. decision $\alpha^*(t_k)$ from \eqref{eq:iid1}-\eqref{eq:iid2} in the above and noting that this alternative decision
is independent of $\script{H}[k]$ gives:
\begin{align}
&\frac{\expect{-V\overline{B}\phi(\alpha(t_k)) + Q[k]p(\alpha(t_k))|\script{H}[k]}}{\expect{1 + \phi(\alpha(t_k))/\lambda|\script{H}[k]}}
\leq -V\mu^* + Q[k]\beta \label{eq:key-feature}
\end{align}

\subsection{Performance theorem}

\begin{theorem}
The proposed algorithm achieves the constraint $\limsup_{T\rightarrow\infty}\frac{1}{T}\sum_{t=0}^{T-1}p(\alpha(t))\leq \beta$ and yields throughput satisfying (with probability 1):
\begin{equation}\label{e6}
    \liminf_{T\rightarrow\infty}\frac{1}{T}\sum_{t=0}^{T-1}\phi(\alpha(t))\geq \mu^{*}-\frac{C_0}{V}
\end{equation}
where $C_0$ is a constant.
\end{theorem}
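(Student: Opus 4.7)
The plan is to prove the two conclusions separately. The power-constraint half is essentially free from the preceding lemmas: Lemma \ref{lem:2} provides a deterministic pathwise bound $Q[k] \leq C$ for an explicit constant depending on $V$, $\overline{B}$, $p^{\min}$, $p^{\max}$, and $\beta$, and Lemma \ref{lem:1} then converts any such uniform queue bound into the desired $\limsup$ average power inequality. No new work is needed for that part.

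For the throughput half I would run the standard Lyapunov drift-plus-penalty argument adapted to renewal frames. Setting $L[k] := \tfrac{1}{2}Q[k]^2$ and squaring the recursion \eqref{eq:q-update} (using $(\max\{x,0\})^2 \leq x^2$) gives the per-frame drift bound
$$L[k+1] - L[k] \leq Q[k]\bigl(p(\alpha(t_k)) - \beta T[k]\bigr) + \tfrac{1}{2}\bigl(p(\alpha(t_k)) - \beta T[k]\bigr)^2.$$
Since $p(\alpha(t_k)) \in [0, p^{\max}]$ and $T[k]$ is either $1$ or $1+G_k$ with $G_k$ geometric (so has finite second moment), the quadratic term has conditional expectation bounded by a universal constant $D$. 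Subtracting the penalty $V\overline{B}\phi(\alpha(t_k))$, taking $\expect{\,\cdot\, \mid \script{H}[k]}$, and multiplying the key-feature inequality \eqref{eq:key-feature} through by the strictly positive factor $\expect{T[k] \mid \script{H}[k]}$ yields
$$\expect{L[k+1] - L[k] - V\overline{B}\phi(\alpha(t_k)) \mid \script{H}[k]} \leq D - V\mu^{*}\expect{T[k] \mid \script{H}[k]}.$$

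Taking total expectations, telescoping from $k=0$ to $K-1$, using $L[0]=0$ together with the Lemma \ref{lem:2} bound on $\expect{L[K]}$, and observing that $\sum_{k=0}^{K-1}\phi(\alpha(t_k)) = \sum_{t=0}^{t_K-1}\phi(\alpha(t))$ (because $\phi$ vanishes on idle slots) then produces the expected frame-average bound
$$\frac{\expect{\overline{B}\sum_{t=0}^{t_K-1}\phi(\alpha(t))}}{\expect{t_K}} \geq \mu^{*} - \frac{D}{V},$$
after using $\expect{t_K} = \sum_k \expect{T[k]} \geq K$. To upgrade this to the almost-sure $\liminf$ statement, I would decompose each frame increment $L[k+1]-L[k]-V\overline{B}\phi(\alpha(t_k))+V\mu^{*}T[k]$ into its $\script{H}[k]$-conditional mean (bounded by $D$) plus a martingale difference with uniformly bounded conditional second moment; boundedness here uses Lemma \ref{lem:2} for $Q[k]$, the bound $p \leq p^{\max}$, and the finite moments of $T[k]$. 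A martingale strong law drives the normalized martingale to zero almost surely, and combining this with the deterministic bound $t_K \geq K$ together with the renewal strong law $t_K/K \to$ constant converts frame averages to the slot averages in \eqref{e6}, yielding $C_0$ proportional to $\overline{B}D$.

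The main obstacle is this very last step: passing from the expected frame-average inequality to the pathwise $\liminf$ in slot time. Frame outcomes $(\phi(\alpha(t_k)), p(\alpha(t_k)), T[k])$ are \emph{not} i.i.d.\ across $k$, because $\alpha(t_k)$ is chosen as a function of $Q[k]$, so the classical renewal-reward SLLN does not apply directly; one must instead combine a martingale-difference strong law with the deterministic queue bound of Lemma \ref{lem:2} to get the required concentration, and then use the renewal lower bound $t_K \geq K$ to push the result through to slot-indexed time. Once that hurdle is cleared, identifying $C_0$ and verifying the bound on $\expect{T[k]^2}$ are routine.
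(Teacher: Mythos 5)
Your proposal follows essentially the same route as the paper: the power constraint via Lemmas \ref{lem:2} and \ref{lem:1}, the quadratic Lyapunov drift bound combined with the key-feature inequality \eqref{eq:key-feature} to obtain $\expect{\Delta[k] + V(\mu^* T[k] - \overline{B}\phi(\alpha(t_k)))\mid\script{H}[k]} \leq C_0$, and then the bound $T[k]\geq 1$ to pass from the frame-indexed ratio to the slot average. The only difference is that the paper black-boxes your martingale-difference decomposition and strong law (the step you flag as the main obstacle) by invoking Proposition 2 of \cite{JAM2012}, for which the deterministic queue bound supplies the hypothesis $\sum_k \expect{\Delta[k]^2}/k^2 < \infty$; your appeal to a renewal SLLN for $t_K/K$ is unnecessary since $t_K \geq K$ already suffices.
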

\begin{proof}
First, for any fixed $V$, Lemma \ref{lem:2} implies that the queue is deterministically bounded.  Thus, according to Lemma \ref{lem:1}, the proposed algorithm achieves the constraint $\limsup_{T\rightarrow\infty}\frac{1}{T}\sum_{t=0}^{T-1}\mathbb{E}[p(\alpha(t))]\leq \beta$. The rest is devoted to proving the throughput guarantee \eqref{e6}.

Define:
\begin{equation}
    L(Q[k])=\frac{1}{2}Q[k]^2. \nonumber
\end{equation}
We call this a \emph{Lyapunov function}. Define a frame-based Lyapunov Drift as:
\[ \Delta[k] = L(Q[k+1]) - L(Q[k]) \]
According to \eqref{eq:q-update} we get
\begin{equation}
    Q[k+1]^2\leq \left(Q[k]+p(\alpha(t_{k}))-T[k]\beta\right)^2. \nonumber
\end{equation}
Thus:
\begin{eqnarray*}
\Delta[k]  \leq \frac{(p(\alpha(t_k)) - T[k]\beta)^2}{2} + Q[k](p(\alpha(t_k)) - T[k]\beta)
\end{eqnarray*}
Taking a conditional expectation of the above given $\script{H}[k]$ and recalling that $\script{H}[k]$ includes the information
$Q[k]$ gives:
\begin{equation} \label{eq:drift}
 \expect{\Delta[k] |\script{H}[k]} \leq C_0 + Q[k]\expect{p(\alpha(t_k)) - \beta T[k]| \script{H}[k]}
 \end{equation}
where $C_0$ is a constant that satisfies the following for all possible histories $\script{H}[k]$:
\[ \expect{\frac{(p(\alpha(t_k)) - T[k]\beta)^2}{2} \left |\right. \script{H}[k]} \leq C_0 \]
Such a constant $C_0$ exists because the power $p(\alpha(t_k))$ is deterministically bounded, and the frame sizes $T[k]$ are bounded in second moment regardless of history.

Adding the ``penalty'' $-\expect{V\overline{B}\phi(\alpha(t_k))|\script{H}[k]}$ to both sides of \eqref{eq:drift} gives:
\begin{align*}
&\expect{\Delta[k] - V\overline{B}\phi(\alpha(t_k))|\script{H}[k]}  \\
&\leq C_0 +  \expect{-V\overline{B}\phi(\alpha(t_k)) + Q[k](p(\alpha(t_k)) - T[k]\beta)| \script{H}[k]} \\
&= C_0 - Q[k]\beta\expect{T[k]|\script{H}[k]}  \\
& +  \frac{\expect{T[k]|\script{H}[k]}\expect{-V\overline{B}\phi(\alpha(t_k)) + Q[k]p(\alpha(t_k))|\script{H}[k]}}{\expect{T[k]|\script{H}[k]}}
\end{align*}
Expanding $T[k]$ in the denominator of the last term  gives:
\begin{align*}
&\expect{\Delta[k] - V\overline{B}\phi(\alpha(t_k))|\script{H}[k]}  \nonumber\\
&\leq C_0 - Q[k]\beta\expect{T[k]|\script{H}[k]}  + \expect{T[k]|\script{H}[k]} \times \nonumber \\
&\frac{\expect{-V\overline{B}\phi(\alpha(t_k)) + Q[k]p(\alpha(t_k))|\script{H}[k]}}{\expect{1 + \phi(\alpha(t_k))/\lambda|\script{H}[k]}}
\end{align*}
Substituting \eqref{eq:key-feature} into the above expression gives:
\begin{eqnarray}
&&\hspace{-.4in}\expect{\Delta[k] - V\overline{B}\phi(\alpha(t_k))|\script{H}[k]} \nonumber  \\
&\leq&  C_0 - Q[k]\beta\expect{T[k]|\script{H}[k]} \nonumber  \\
&& + \expect{T[k]|\script{H}[k]} (-V\mu^* + \beta Q[k])\nonumber \\
&=& C_0 -V\mu^* \expect{T[k] | \script{H}[k]}  \label{eq:dude}
\end{eqnarray}
Rearranging gives:
\begin{equation} \label{eq:dpp}
\expect{\Delta[k] + V(\mu^*T[k] - \overline{B}\phi(\alpha(t_k)))|\script{H}[k]} \leq C_0 \
\end{equation}

The above is a drift-plus-penalty expression.
Because we already know the queue $Q[k]$ is deterministically bounded, it follows that:
\[ \sum_{k=1}^{\infty} \frac{\expect{\Delta[k]^2}}{k^2} < \infty \]
Thus, the drift-plus-penalty result in
Proposition 2 of \cite{JAM2012} ensures that (with probability 1):
\[  \limsup_{K\rightarrow\infty} \frac{1}{K}\sum_{k=0}^{K-1} \left[\mu^* T[k] - \overline{B}\phi(\alpha(t_k))\right] \leq \frac{C_0}{V} \]
Thus, for any $\epsilon>0$ one has for all sufficiently large $K$:
\[ \frac{1}{K}\sum_{k=0}^{K-1}[\mu^* T[k] - \overline{B} \phi(\alpha(t_k))] \leq \frac{C_0}{V} + \epsilon \]
Rearranging implies that for all sufficiently large $K$:
\begin{eqnarray*}
\frac{\sum_{k=0}^{K-1} \overline{B}\phi(\alpha(t_k))}{\sum_{k=0}^{K-1} T[k]} &\geq& \mu^*  - \frac{(C_0/V + \epsilon)}{\frac{1}{K}\sum_{k=0}^{K-1}T[k]}\\
&\geq& \mu^* - (C_0/V + \epsilon)
\end{eqnarray*}
where the final inequality holds because $T[k] \geq 1$ for all $k$. Thus:
\[ \liminf_{K\rightarrow\infty} \frac{\sum_{k=0}^{K-1} \overline{B}\phi(\alpha(t_k))}{\sum_{k=0}^{K-1} T[k]} \geq \mu^* - (C_0/V + \epsilon) \]
The above holds for all $\epsilon>0$.  Taking a limit as $\epsilon\rightarrow 0$ implies:
\[ \liminf_{K\rightarrow\infty} \frac{\sum_{k=0}^{K-1} \overline{B}\phi(\alpha(t_k))}{\sum_{k=0}^{K-1} T[k]} \geq \mu^* - C_0/V, \]
which yields the result by noticing that $\phi(\alpha(t))$ only changes at the boundary of each frame.
\end{proof}

The theorem shows that throughput can be pushed within $O(1/V)$ of the optimal value $\mu^*$, where $V$ can be chosen as large as desired to ensure throughput is arbitrarily close to optimal.  The tradeoff is a queue bound that grows linearly with $V$ according to Lemma \ref{lem:2}, which affects the convergence time required for the constraints to be close to the desired time averages (as described in the proof of Lemma \ref{lem:1}).

\section{Multi-User file downloading}

This section considers a multi-user file downloading system that
consists of $N$ single user subsystems.  Each subsystem is similar to the single-user system described in the previous section.
Specifically, for  the $n$-th user (where $n \in \{1, \ldots, N\}$):
\begin{itemize}
\item The file state process is $F_n(t) \in \{0,1\}$.
\item The transmission decision is $\alpha_n(t) \in \script{A}_n$, where $\script{A}_n$ is an abstract set of transmission options for user $n$.
\item The power expenditure on slot $t$ is $p_n(\alpha_n(t))$.
\item The success probability on a slot $t$ for which $F_n(t)=1$ is $\phi_n(\alpha_n(t))$, where $\phi_n(\cdot)$ is the function that describes file completion probability for user $n$.
\item The idle period parameter is $\lambda_n>0$.
\item The average file size is $\overline{B}_n$ bits.
\end{itemize}
Assume that the random variables associated with different subsystems are mutually independent.

To control the downloading process, there is a
central server with only $M$ threads ($M<N$), meaning that \emph{at most $M$ jobs can be processed simultaneously}. So at each time slot, the server has to make decisions selecting at most $M$ out of $N$ users to transmit a portion of their files. These decisions are further restricted by a global time average power constraint. The goal is to maximize the aggregate throughput, which is defined as
\begin{equation}
    \liminf_{T\rightarrow\infty}\frac{1}{T}\sum_{t=0}^{T-1}\sum_{n=1}^Nc_{n}\overline{B}_n\phi(\alpha_{n}(t)) \nonumber
\end{equation}
where $c_1, c_2, \ldots, c_N$ are a collection of positive weights that can be used to prioritize users.
Thus, this multi-user file downloading problem reduces down to the following:
\begin{align}
   \mbox{Max:} &~\liminf_{T\rightarrow\infty}\frac{1}{T}\sum_{t=0}^{T-1}\sum_{n=1}^N
    c_{n}\overline{B}_n\phi_n(\alpha_{n}(t)) \label{eq:multi-1} \\
    \mbox{S.t.:}&~\limsup_{T\rightarrow\infty}\frac{1}{T}\sum_{t=0}^{T-1}\sum_{n=1}^Np_n(\alpha_{n}(t))\leq \beta \label{eq:multi-2}\\
    &~\sum_{n=1}^NI(\alpha_{n}(t))\leq M~~\forall t\in\{0,1,2,\cdots\} \label{eq:mutli-3} \\
    &~Pr[F_n(t+1)=1~|~F_n(t)=0]=\lambda_n\label{eq:multi-4}\\
    &~Pr[F_n(t+1)=0~|~F_n(t)=1]=\phi_n(\alpha_n(t))\label{eq:multi-5}
\end{align}
where the constraints \eqref{eq:multi-4}-\eqref{eq:multi-5} hold for all $n \in \{1, \ldots, N\}$ and $t \in \{0, 1, 2,\ldots\}$, and
where $I(\cdot)$ is the indicator function defined as:
\begin{equation}
    I(x)=\left\{
           \begin{array}{ll}
             0, & \hbox{if $x=0$;} \\
             1, & \hbox{otherwise.}
           \end{array}
         \right.\nonumber
\end{equation}

\subsection{Lyapunov Indexing Algorithm}
This section develops our indexing algorithm for the multi-user case using the single-user algorithm as a stepping stone. The major difficulty is  the instantaneous constraint $\sum_{n=1}^NI(\alpha_{n}(t))\leq M$.  Temporarily neglecting this constraint,
we use Lyapunov optimization to deal with the time average power constraint first.

We introduce a virtual queue $Q(t)$, which is again 0 at $t=0$. Instead of updating it on a frame basis, the server updates this queue every slot as follows:
\begin{equation}\label{m2}
    Q(t+1)=\max\left\{Q(t)+\sum_{n=1}^Np_n(\alpha_n(t))-\beta,0\right\}.
\end{equation}
Define $\script{N}(t)$ as the set of users beginning their renewal frames at time $t$, so that $F_n(t)=1$ for all such users.
 In general, $\script{N}(t)$ is a subset of $\script{N}=\{1,2,\cdots,N\}$. Define $|\script{N}(t)|$ as the number of users
 in the set $\script{N}(t)$.

At each time slot $t$, the server observes the queue state $Q(t)$ and chooses $(\alpha_1(t), \ldots, \alpha_N(t))$
to maximize the following drift-plus-penalty expression subject to an instantaneous constraint:
\begin{eqnarray}
    \mbox{Max.:} & \sum_{n\in\mathcal{N}(t)}\frac{Vc_n\overline{B}_n\phi_n(\alpha_{n}(t))- Q(t)p_n(\alpha_{n}(t))}
        {1 + \phi_n(\alpha_n(t))/\lambda_n} \label{eq:multi-alg1} \\
    \mbox{S.t.:}  & \alpha_n(t) \in \script{A}_n \: \: \forall n \in \script{N} \label{eq:multi-alg2} \\
 &   \alpha_n(t) = 0 \: \: \forall n \notin \script{N}(t) \label{eq:multi-alg3} \\
    &\sum_{n\in\mathcal{N}(t)}I(\alpha_{n}(t))\leq M. \label{eq:multi-alg4}
\end{eqnarray}

Notice that in \eqref{eq:multi-alg1}, the term
\begin{equation}
    g_n(\alpha_n(t))\triangleq\frac{Vc_n\overline{B}_n\phi_n(\alpha_{n}(t))- Q(t)p_n(\alpha_{n}(t))}
        {1+\phi_n(\alpha_n(t))/\lambda_n} \nonumber
\end{equation}
is similar to the expression \eqref{e5} used in the single-user optimization.   Call $g_n(\alpha_n(t))$ a \emph{reward}.
Now define an index for each subsystem $n$ by: \begin{equation}\label{m5}
    \gamma_n(t)\triangleq\max_{\alpha_n(t)\in\mathcal{A}_n}g_n(\alpha_n(t))
\end{equation}
which is the maximum possible reward one can get from the $n$-th subsystem at time slot $t$. Thus, it is natural to define the following myopic algorithm:  Find the (at most) $M$ subsystems in $\script{N}(t)$ with the greatest rewards, and serve these with their corresponding
optimal $\alpha_n(t)$ options in $\script{A}_n$ that maximize $g_n(\alpha_n(t))$.
Specifically:
\begin{itemize}
  \item At each time slot $t$, the server observes virtual queue state $Q(t)$ and computes the indices using (\ref{m5}) for all $n\in\mathcal{N}(t)$.
  \item Activate the $\min[M, |\script{N}(t)|]$ subsystems with greatest indices, using their corresponding actions $\alpha_n(t) \in \script{A}_n$ that
  maximize $g_n(\alpha_n(t))$.
  \item Update $Q(t)$ according to \eqref{m2}  at the end of each slot $t$.
  \end{itemize}

\subsection{Theoretical Performance Analysis}

In this subsection, we show that the above algorithm always satisfies the desired time average power constraint.  Define:
\begin{eqnarray*}
p^{min}_n &=& \min_{\alpha_n\in\mathcal{A}_n\setminus\{0\}}p_n(\alpha_n) \\
p^{min} &=& \min_np^{min}_n \\
p^{max}_n &=& \max_{\alpha_n\in\mathcal{A}_n}p_n(\alpha_n) \\
c^{max} &=& \max_{n}c_n \\
\overline{B}^{max} &=& \max_n \overline{B}_n
\end{eqnarray*}

\begin{lemma} \label{lem:3}
Under the above Lyapunov indexing algorithm, the
queue $\{Q(t)\}_{t=0}^{\infty}$  is deterministically bounded. Specifically, we have for all $t \in \{0, 1, 2, \ldots\}$:
\[ Q(t)\leq \max\left\{\frac{Vc^{max}\overline{B}^{max}}{p^{min}}+\sum_{n=1}^Np^{max}_n-\beta,0\right\}\]
\end{lemma}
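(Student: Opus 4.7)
The plan is to mirror the structure of the proof of Lemma \ref{lem:2}, adapted to the per-slot queue update \eqref{m2} and to the fact that the multi-user DPP objective \eqref{eq:multi-alg1} decomposes as a sum of per-user rewards $g_n(\alpha_n(t))$. First I would split into two regimes based on whether $\sum_{n=1}^N p_n^{max} \leq \beta$ or $\sum_{n=1}^N p_n^{max} > \beta$. In the former regime, the drift inside the $\max\{\cdot,0\}$ in \eqref{m2} is always nonpositive, so $Q(t) \leq Q(0) = 0$ for all $t$, matching the $\max\{\cdot,0\}$ bound in the statement.

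For the nontrivial regime $\sum_{n=1}^N p_n^{max} > \beta$, I would proceed by induction on $t$. The base case is immediate. For the inductive step, assume the bound holds at time $t$ and split into two sub-cases, exactly as in Lemma \ref{lem:2}. If $Q(t) \leq Vc^{max}\overline{B}^{max}/p^{min}$, then \eqref{m2} gives the crude bound
\begin{equation*}
Q(t+1) \leq Q(t) + \sum_{n=1}^N p_n(\alpha_n(t)) - \beta \leq \frac{Vc^{max}\overline{B}^{max}}{p^{min}} + \sum_{n=1}^N p_n^{max} - \beta,
\end{equation*}
since the server activates at most $M < N$ users and each contributes at most $p_n^{max}$ to the power sum. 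If instead $Vc^{max}\overline{B}^{max}/p^{min} < Q(t) \leq Vc^{max}\overline{B}^{max}/p^{min} + \sum_{n=1}^N p_n^{max} - \beta$, I would argue that the Lyapunov indexing algorithm must pick $\alpha_n(t) = 0$ for every user $n$. Indeed, for any $\alpha_n \in \script{A}_n$ with $p_n(\alpha_n) > 0$, one has $p_n(\alpha_n) \geq p^{min}$ and $\phi_n(\alpha_n) \leq 1$, so the numerator of $g_n(\alpha_n)$ is bounded by
\begin{equation*}
Vc_n\overline{B}_n\phi_n(\alpha_n) - Q(t)p_n(\alpha_n) \leq Vc^{max}\overline{B}^{max} - Q(t)p^{min} < 0.
\end{equation*}
Since the denominator $1 + \phi_n(\alpha_n)/\lambda_n$ is positive, every positive-power action yields a strictly negative reward, whereas the idle action $\alpha_n = 0$ yields reward exactly $0$. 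Hence the maximizer in \eqref{m5} chooses $\alpha_n = 0$ for every $n$, and even after selecting the top $\min[M,|\script{N}(t)|]$ indices the total power used on slot $t$ is $0$. Therefore $Q(t+1) \leq \max\{Q(t) - \beta, 0\} \leq Q(t)$, which preserves the inductive bound.

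The main obstacle I anticipate is making sure the per-user indexing argument really does force every active user's optimal action to be idle when $Q(t)$ is large, since the multi-user objective is a sum rather than a single ratio and one might worry the $M$-user selection step interacts nontrivially with the bound. The key observation that unlocks the argument is that the objective \eqref{eq:multi-alg1} is separable across users and the instantaneous constraint \eqref{eq:multi-alg4} only limits how many users have $\alpha_n \neq 0$; selecting more users can never help when every nonzero option strictly decreases an individual reward below the idle reward of $0$. Once this is in hand, the proof reduces to the same two-case induction used in Lemma \ref{lem:2}, with $Vc^{max}\overline{B}^{max}$ and $\sum_{n=1}^N p_n^{max}$ playing the roles of $V\overline{B}$ and $p^{max}$ respectively.
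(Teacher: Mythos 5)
Your proposal is correct and follows essentially the same argument as the paper's proof: the same split on whether $\sum_{n=1}^N p_n^{max} \leq \beta$, the same induction on $t$, and the same two sub-cases, with the key step being that once $Q(t) > Vc^{max}\overline{B}^{max}/p^{min}$ every positive-power action has a strictly negative per-user reward while idling gives reward $0$, forcing all users idle. Your added remark that the separability of \eqref{eq:multi-alg1} makes the single-user argument carry over user-by-user is a slightly more explicit justification of a step the paper states tersely, but it is not a different route.
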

\begin{IEEEproof}
First, consider the case when $\sum_{n=1}^Np^{max}_n \leq \beta$. Since $Q(0)=0$, it is clear from the updating
rule \eqref{m2} that $Q(t)$ will remain 0 for all $t$.

Next, consider the case when $\sum_{n=1}^Np^{max}_n > \beta$. We prove the assertion by induction on $t$. The result
trivially holds for $t=0$. Suppose at $t=t'$, we have:
\[ Q(t')<\frac{Vc^{max}\overline{B}^{max}}{p^{min}}+\sum_{n=1}^Np^{max}_n-\beta \]
We are going to prove that the same statement holds for $t=t'+1$. We further divide it into two cases:
\begin{enumerate}
  \item $Q(t')\leq\frac{Vc^{max}\overline{B}^{max}}{p^{min}}$. In this case, since the queue increases by at most
  $\sum_{n=1}^Np^{max}_n - \beta$ on one slot, we have:
  \[ Q(t'+1) \leq \frac{Vc^{max}\overline{B}^{max}}{p^{min}} + \sum_{n=1}^Np^{max}_n - \beta\]

  \item $\frac{Vc^{max}\overline{B}^{max}}{p^{min}}<Q(t')\leq\frac{Vc^{max}\overline{B}^{max}}{p^{min}}+\sum_{n=1}^Np^{max}_n-\beta$. In this case, since $\phi_n(\alpha_n(t'))\leq 1$, there is no possibility that $Vc_n\overline{B}^{max}\phi_n(\alpha_n(t'))\geq Q(t')p_n(\alpha_n(t'))$ and thus $\alpha_n(t')$ must be 0 for all $n$. Thus, all indices are 0. This implies that $Q(t'+1)$ cannot increase, and we get $Q(t'+1)\leq\frac{Vc^{max}\overline{B}^{max}}{p^{min}}+\sum_{n=1}^Np^{max}_n-\beta$.
\end{enumerate}
\end{IEEEproof}

\begin{theorem}
The proposed Lyapunov indexing algorithm achieves the constraint:
\[ \limsup_{T\rightarrow\infty}\frac{1}{T}\sum_{t=0}^{T-1}\sum_{n=1}^Np_n(\alpha_{n}(t))\leq \beta \]
\end{theorem}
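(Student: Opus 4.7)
The plan is to mimic the proof of Lemma \ref{lem:1}, but on a per-slot basis rather than a per-frame basis, leveraging the deterministic queue bound from Lemma \ref{lem:3}. The key observation is that the virtual queue update \eqref{m2} implies the one-sided inequality
\[
Q(t+1) \geq Q(t) + \sum_{n=1}^{N} p_n(\alpha_n(t)) - \beta,
\]
which, rearranged, reads $\sum_{n=1}^{N} p_n(\alpha_n(t)) \leq \beta + Q(t+1) - Q(t)$.

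Next, I would telescope this inequality over $t \in \{0, 1, \ldots, T-1\}$ to obtain
\[
\sum_{t=0}^{T-1} \sum_{n=1}^{N} p_n(\alpha_n(t)) \leq T\beta + Q(T) - Q(0).
\]
Because $Q(0) = 0$ and Lemma \ref{lem:3} gives the deterministic bound
\[
Q(T) \leq C := \max\!\left\{\tfrac{Vc^{max}\overline{B}^{max}}{p^{min}} + \sum_{n=1}^{N} p_n^{max} - \beta,\ 0\right\},
\]
the right-hand side is at most $T\beta + C$. Dividing through by $T$ yields
\[
\frac{1}{T}\sum_{t=0}^{T-1}\sum_{n=1}^{N} p_n(\alpha_n(t)) \leq \beta + \frac{C}{T},
\]
and taking the $\limsup$ as $T \to \infty$ drives the residual $C/T$ to zero, producing the desired average power constraint.

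No real obstacle is anticipated: the entire argument is a three-line telescoping bound once Lemma \ref{lem:3} is in hand. The only subtlety to flag is that, unlike the single-user proof, the sum over $t$ already lines up with the per-slot updating of $Q(t)$, so no renewal-frame conversion (as in the step between $\sum_k p(\alpha(t_k))$ and $\sum_t p(\alpha(t))$ of Lemma \ref{lem:1}) is needed. Everything else—boundedness of $p_n$, the truncation in \eqref{m2}, and the initialization $Q(0)=0$—is directly inherited from the statement of the algorithm.
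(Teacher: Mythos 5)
Your proposal is correct and is essentially the paper's own argument: the paper proves the theorem by invoking Lemma \ref{lem:1} in the special case where each frame is a single slot (with $\sum_n p_n(\alpha_n(t))$ playing the role of $p(\alpha(t_k))$) and then applying the deterministic bound of Lemma \ref{lem:3}, which is exactly the telescoping computation you write out explicitly. Your observation that no renewal-frame conversion is needed is also accurate.
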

\begin{IEEEproof}
Using Lemma \ref{lem:1} under the special case that each frame only occupies one slot, we get that if $\{Q(t)\}_{t=0}^\infty$ is deterministically bounded, then the time average constraint is satisfied. Then, according to Lemma \ref{lem:3} we are done.
\end{IEEEproof}


\section{Simulation Experiments} \label{section:sims}

In this section, we demonstrate the near optimality of the multi-user Lyapunov indexing algorithm by extensive simulations. In the first part, we simulate the case in which the file length distribution is geometric, and show that the suboptimality gap is extremely small. In the second part, we test the robustness of our algorithm for more general scenarios in which the file length distribution is not geometric.
For simplicity, it is assumed throughout that all transmissions send a fixed sized packet, all files are an integer number of these packets, and that decisions $\alpha_n(t) \in \script{A}_n$ affect the success probability of the transmission as well as the power expenditure.

\subsection{Lyapunov Indexing for multi-user downloading with geometric file length} \label{subsection:geometric-sim}
In the first simulation we use $N=3$, $M=1$ with action set $\mathcal{A}_n=\{0,1\}~\forall n$; idle period parameter: $\lambda_1=0.8$, $\lambda_2=0.5$, $\lambda_3=0.1$.  Files consist of an integer number of packets and have independent and geometrically
distributed sizes with parameters $\mu_1=0.1$, $\mu_2=0.2$, and $\mu_3=0.4$; so that the expected file size for user $n \in \{1, 2, 3\}$
is $\overline{B}_n = 1/\mu_n$ packets.  The success probability functions are given by: $\phi_1(1)=0.9\mu_1$, $\phi_2(1)=0.8\mu_2$, $\phi_3(1)=0.7\mu_3$; power expenditure function: $p_1(1)=2$, $p_2(1)=1.5$, $p_3(1)=1$; weight parameters: $c_1=1$, $c_2=1.5$, $c_3=2$ and $\beta=1$. The algorithm is run for 1 million slots. We compare the performance of our algorithm with the optimal randomized policy. The optimal policy is computed by constructing composite states (i.e. if queue 1 is at state 0, queue 2 is at state 1 and queue 3 is at state 1, we view 011 as a composite state, and then reformulating this MDP into a linear program (see \cite{MDP}) which contains 20 variables.

In Fig. \ref{fig:Stupendous2}, we show that as our tradeoff parameter $V$ gets larger, the objective value approaches the optimal value and achieves a near optimal performance. Fig. \ref{fig:Stupendous3} and Fig. \ref{fig:Stupendous4} show that $V$ also affects the virtual queue size and the constraint gap. As $V$ gets larger, the average virtual queue size becomes larger and the gap becomes smaller. We also plot the upper bound of queue size we derived from Lemma \ref{lem:3} in Fig. \ref{fig:Stupendous3}, demonstrating that the queue is bounded.

\begin{figure}[htbp]
   \centering
   \includegraphics[height=2.5in]{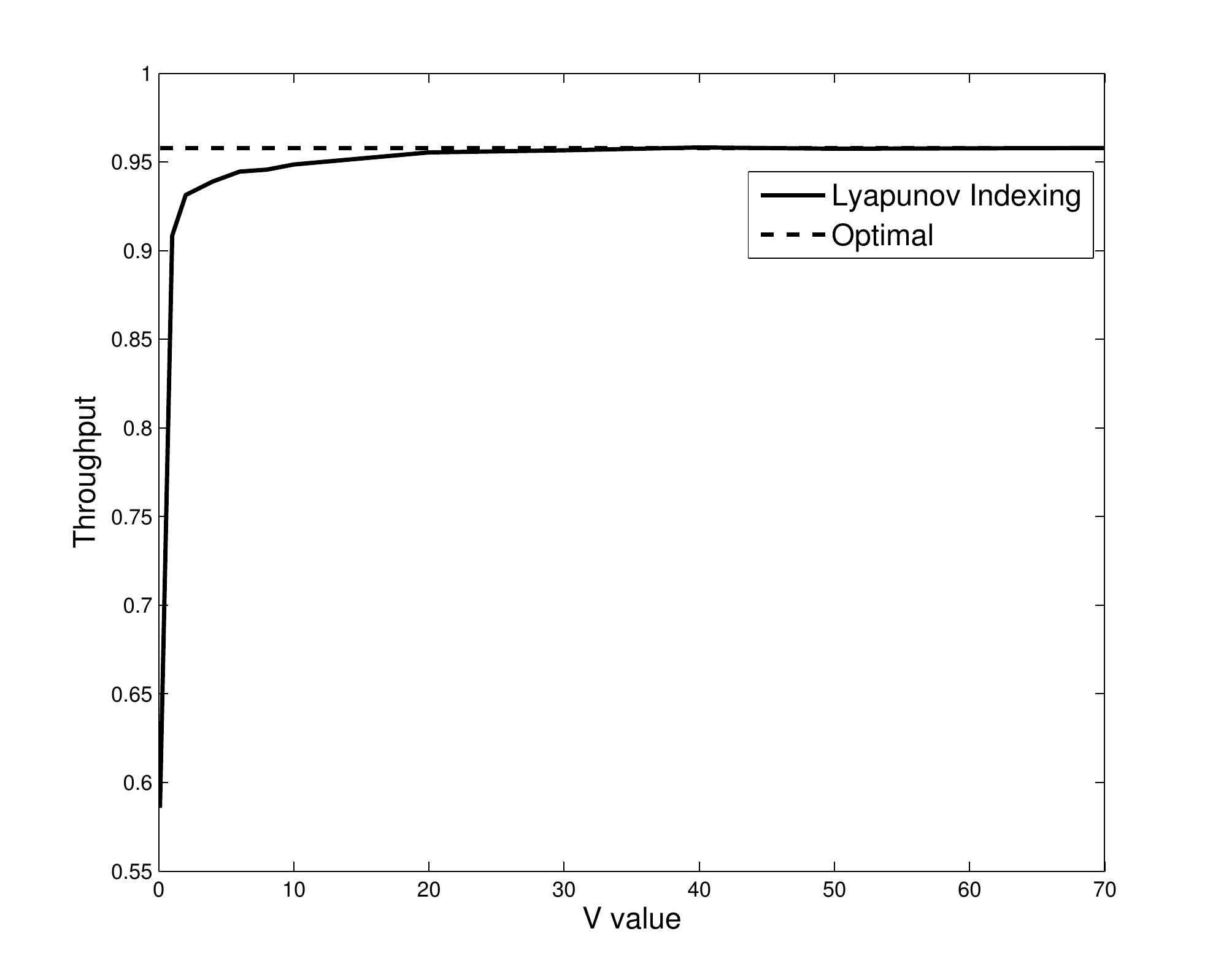} 
   \caption{Throughput versus tradeoff parameter V}
   \label{fig:Stupendous2}
\end{figure}

\begin{figure}[htbp]
   \centering
   \includegraphics[height=2.5in]{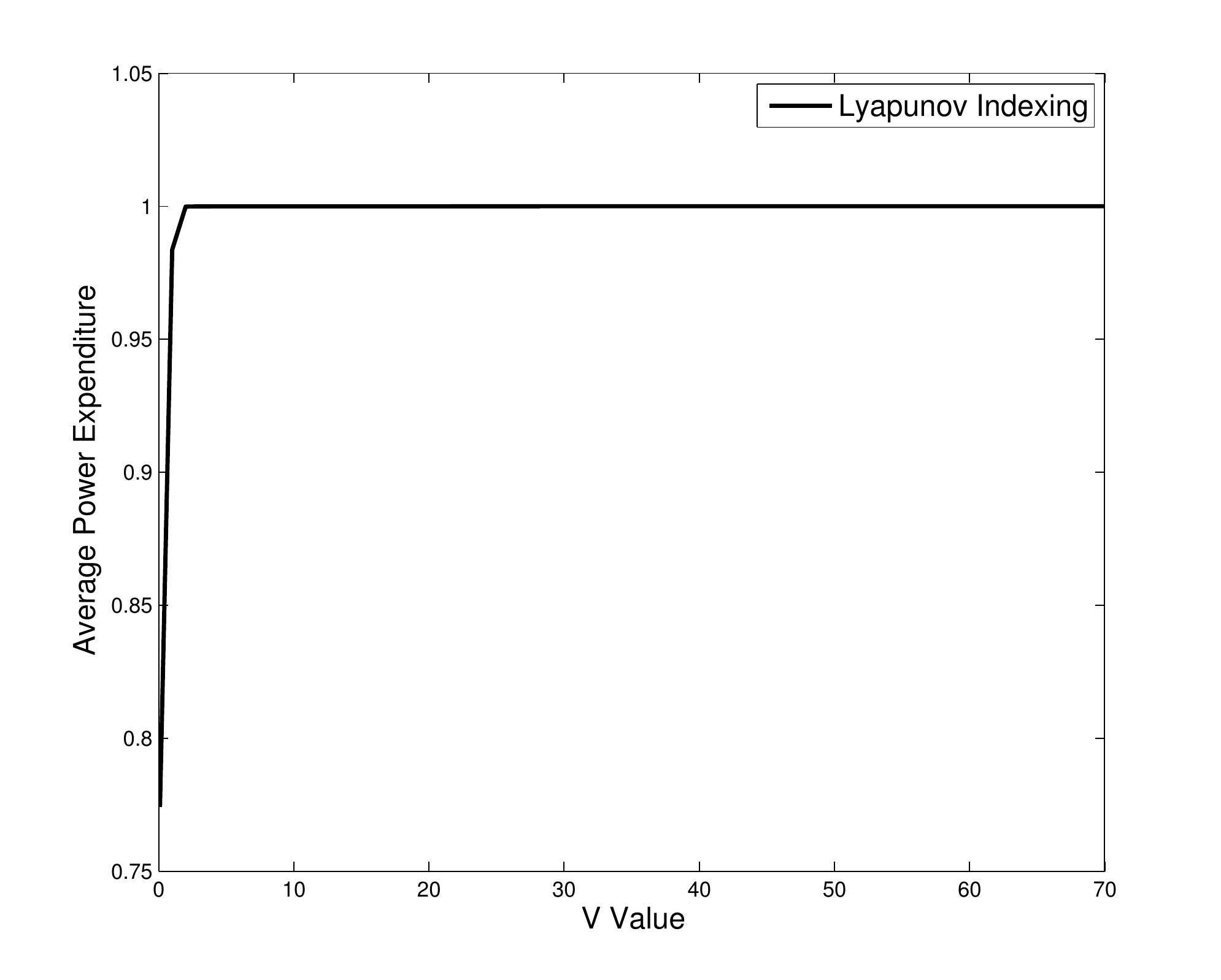} 
   \caption{The time average power consumption versus tradeoff parameter $V$.}
   \label{fig:Stupendous3}
\end{figure}

\begin{figure}[htbp]
   \centering
   \includegraphics[height=2.5in]{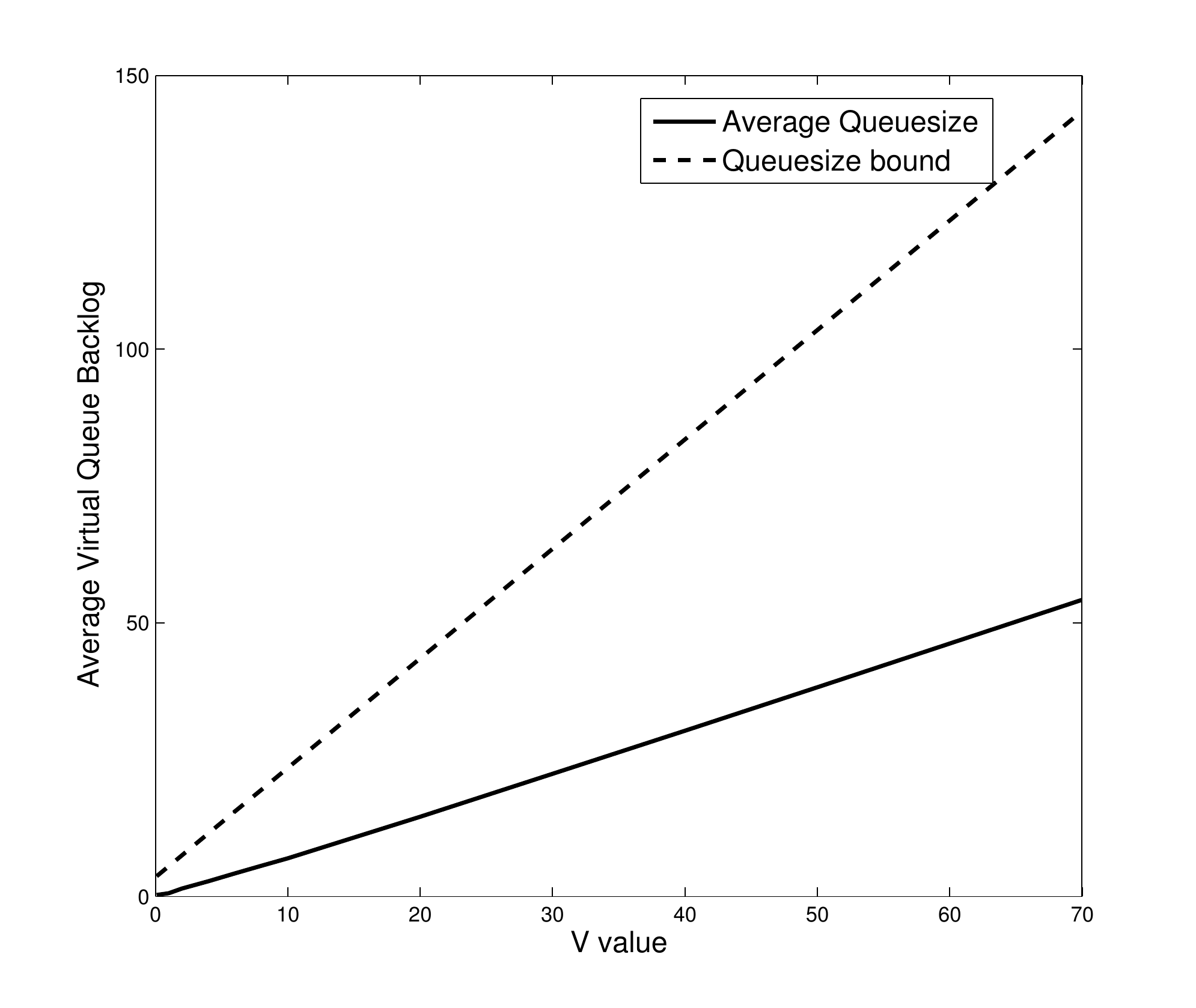} 
   \caption{Average virtual queue backlog versus tradeoff parameter $V$.}
   \label{fig:Stupendous4}
\end{figure}

In the second simulation, we explore the parameter space and demonstrate that in general the suboptimality gap of our algorithm is negligible. First, we define the relative error as the following:
\begin{equation}\label{e20}
    \textrm{relative error}=\frac{|OBJ-OPT|}{OPT}
\end{equation}
where $OBJ$ is the objective value after running 1 million slots of our algorithm and $OPT$ is the optimal value. We first explore the system parameters by letting $\lambda_{n}$'s and $\mu_n$'s take random numbers between 0 and 1, choosing $V=70$ and fixing the remaining
parameters the same as the last experiment. We conduct 1000 Monte-Carlo experiments and calculate the average relative error, which is \textbf{0.064\%}.

Next, we explore the control parameters by letting the $p_{n}(1)$ and $\phi_n(1)/\mu_n$ values take random numbers between 0 and 1, choosing $V=70$ and fixing the remaining parameters the same as the first simulation. The relative error is \textbf{0.077\%}. Both experiments show that the suboptimality gap is extremely small.

\subsection{Lyapunov indexing for multi-user downloading with non-memoryless file lengths}
In this part, we test the sensitivity of the algorithm to different file length distributions. In particular, the uniform distribution and the Poisson distribution are implemented respectively, while our algorithm still treats them as a geometric distribution with same mean. We then compare their throughputs with the geometric case.

We still use $N=3$, $M=1$ with action set $\mathcal{A}_n=\{0,1\}~\forall n$. For the uniform distribution case, the file lengths of the three subsystems are uniformly distributed between $[5,15]$, $[2,8]$ and $[1,5]$ packets, respectively, with integer packet numbers. For the Poisson distribution case, the Poisson parameters are set to ensure means of $10$, $5$ and $3$ packets, respectively.  We then keep the remaining conditions the same as the first simulation scenario in Section \ref{subsection:geometric-sim}.  In the algorithm we use $\phi_n(\alpha_n)$ functions defined
using parameters $\overline{B}_n = 1/\mu_n$ with $\mu_1 = 1/10$, $\mu_2 = 1/5$, $\mu_3 = 1/3$.  While the decisions are made using these values, the affect of these decisions incorporates the actual (non-memoryless) file sizes.
Fig. \ref{fig:Stupendous5}  shows the throughput-versus-$V$ relation for the two non-memoryless cases and the memoryless case with matched means.
Remarkably, the curves are almost indistinguishable. This illustrates that
the indexing algorithm is robust under different file length distributions.

\begin{figure}[htbp]
   \centering
   \includegraphics[height=2.5in]{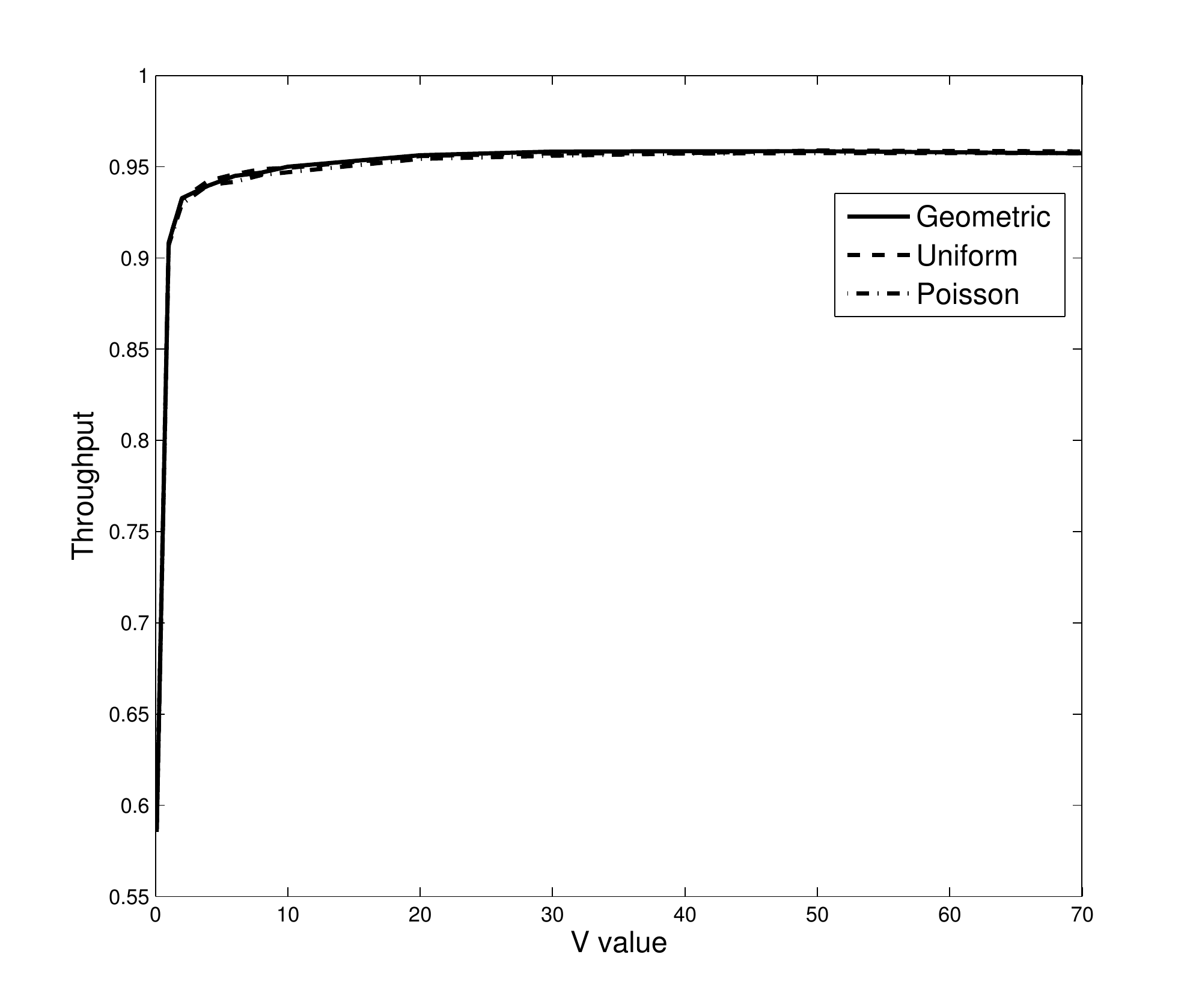} 
   \caption{Throughput versus tradeoff parameter $V$ under different file length distributions.}
   \label{fig:Stupendous5}
\end{figure}

\section{Conclusions}

We have investigated a file downloading system where the network delays affect the file arrival processes.  The single-user case was solved by a variable frame length Lyapunov optimization  method.  The technique was extended as a well-reasoned heuristic for the multi-user case.  Such heuristics are important because the problem is a multi-dimensional Markov decision problem with very high complexity.
The heuristic is simple, can be implemented in an online fashion, and was analytically shown to achieve the desired average power constraint. While we do not have a proof of throughput optimality for the multi-user case, simulations suggest that the algorithm is very close to optimal. Further, simulations suggest that non-memoryless file lengths can be accurately approximated by the algorithm.
These methods can likely be applied in more general situations of restless multi-armed bandit problems with constraints.

\bibliographystyle{unsrt}
\bibliography{bibliography/refs}
\end{document}